\title[ ]{Dynamical localization for  polynomial long-range hopping random operators on $\mathbb{Z}^d$ }
\author{Wenwen Jian}
\address[Wenwen Jian]{College of Arts and Sciences, Shanghai Polytechnic University, Shanghai 201209, P. R. China} \email{wwjian16@fudan.edu.cn}
\author{Yingte Sun}
\address[Yingte Sun]{School of Mathematical Sciences,
Yangzhou University,
Yangzhou 225009,
P. R. China} \email{sunyt15@fudan.edu.cn}
\newcommand{\R}{\mathbb R}
\newcommand{\Z}{\mathbb Z}
\newcommand{\C}{\mathbb C}
\newcommand{\N}{\mathbb{N}}
\theoremstyle{plain}
\newtheorem{thm}{Theorem}[section]
 \newtheorem{lem}[thm]{Lemma}
 \theoremstyle{definition}
 \newtheorem{defn}[thm]{Definition}
 \newtheorem{rem}[thm]{Remark}
 \numberwithin{equation}{section}
\begin{document}

\begin{abstract}
In this paper, we prove a power-law version dynamical localization for a random operator $\mathrm{H}_{\omega}$ on $\mathbb{Z}^d$ with long-range hopping. In breif, for the linear Schr\"odinger equation $$\mathrm{i}\partial_{t}u=\mathrm{H}_{\omega}u, \quad u \in \ell^2(\mathbb{Z}^d), $$  the Sobolev norm of the solution with well localized  initial state is bounded for  any $t\geq 0$.

\end{abstract}

\maketitle

\section{Introduction and the main result}
From the breaking working of Anderson \cite{And08}, a great deal of attention has paid to the  Anderson model (a linear random Schr\"odinger operator) $\mathrm{H}_{\omega}$ on  $\mathbb{Z}^d$, where $$ \mathrm{H}_{\omega}= \mathrm{H}_0+\mathrm{V}_{\omega}.$$ The operator  $\mathrm{H}_0$ is a  negative discrete  Laplacian:
$$(\mathrm{H}_0u)(n)=-\sum_{m\in\mathbb{Z}^d:\ \sum^d_{v=1}|m_v-n_v|=1}(u(m)-u(n)).$$
The potential $\mathrm{V}_{\omega}$ is a multiplication operator with a function $\mathrm{V}_{\omega}(n)$ on $\Z^d$, that $\mathrm{V}_{\omega}(n)$ are independent, identically distributed random variables. We say that the operator $\mathrm{H}_{\omega}$ has exponential localization, if its spectrum is pure point with exponential decay energy state. Namely, for some $\alpha>0$, and any energy state $\psi_{k}$, one has
\begin{equation}\label{c1}
|\psi_{k}(n)|\leq C(k)e^{-\alpha|n|},
\end{equation}
where $C(k)<+\infty$, and depend on energy state.

The mathematicians have development a bit knowledge about the localization of the random operator $ \mathrm{H}_{\omega}$.  For  $d=1$, they  proved that  the exponential localization about the random Schr\"odinger operator $\mathrm{H}_{\omega}$  for all energies. For $d \geq 2$, based on the multi-scale-analysis method(see \cite{Fro83,Kir03})  and fractional moment method(see \cite{AM93,AM15}), they can prove that the exponential localization of Anderson model at high disorder or low energies. However£¬ the physicists are more concerned with the transport properties of the model. In particular, the phenomenon is known as dynamical localization.

  Considering the random operator $\mathrm{H}_{\omega}$ on $\mathbb{Z}^d$  with pure point spectrum, the notion of dynamical localization can be reformulated as follows: for the Schr\"odinger equation on $\mathbb{Z}^d$,
\begin{equation}\label{eq1}
\mathrm{i}\partial_{t}u=\mathrm{H}_{\omega}u,  \quad  u \in \ell^2(\Z^d)
\end{equation}
with well localised initial state $u(0)$, the solution of the equation \eqref{eq1} satisfies that
\begin{equation}
\sup_{t}\|u(t)\|_{\mathrm{H}^s}=\sup_{t}(\sum_{n \in \Z^d}(1+|n|)^{2s}|u_{n}(t)|^2)^{\frac{1}{2}} < +\infty,
\end{equation}
for any $s>0$. Hence, the dynamical localization is equivalent to  the Sobolev norm of the solution is bounded for the all time.

The first rigorous proof of dynamical localization is attributed to Aizenmman\cite{AM94} by employing the fractional moment method. From the point view of multi-scale analysis technique, an effective way to obtain dynamical localization is to control the constant $C(k)$ in \eqref{c1}. In \cite{Del96}, the authors introduced the SULE condtition. Namely, all the energy state of random Schr\"odinger operator $\mathrm{H}_{\omega}$ have the form of
\begin{equation}
|\psi_{k}(n)| \leq D(\epsilon,\omega)e^{\epsilon|n_k|}e^{-\alpha|n-n_k|},
\end{equation}
where $\epsilon> 0$, $D(\epsilon,\omega)$ is an finite constant that does not depend on the energy of the state, and $n_k$ is the localization center point where $\psi_k(n)$ has its maximum. After that, the results of SULE condition was applied in \cite{Do1,Del96,J19,T01} to prove dynamical localization of some concrete models.

The above results focus on the models with exponential localization energy state. However, there are no results involving the models with power-law localization energy state. Namely,  for some $\alpha>0$, and any energy state $\psi_{k}$, one has
\begin{equation}\label{c2}
|\psi_{k}(n)|\leq C(k)|n|^{-\alpha},
\end{equation}
where $C(k)$ is finite constant depend on energy state. In \cite{Shi2020}, Shi study the $d$ dimensional random operators with long-range hopping, that is
\begin{equation}\label{h}
  \mathrm{H}_{\omega}=\lambda^{-1}\mathcal{T}+V_{\omega}(n)\delta_{nn'},\quad \lambda\geq 1,
\end{equation}
where $\lambda$ is the coupling constant describing the effect of disorder. Thorough the multi-scale analysis method, Shi proves that the energy state of random operator \eqref{h} exhibits power-law localization of energy state. In this paper, we try to extend the SULE condition to the power-law localization energy state and prove a new version of dynamical localization for the random operator \eqref{h}.

Here, we make some set-up for our main results.

$\star$ The polynomial long-range hopping operator $\mathcal{T}$ is
\begin{equation}\label{t}
  \mathcal{T} (m,n)=\left\{  \begin{aligned}
                  |m-n|^{-r}, &\ \ \ \ \text{for } m\neq n \text{ with } m,n\in\mathbb{Z}^d, \\
                   0,\ \ \ \ \ \  &\ \ \ \  \text{for } m=n\in\mathbb{Z}^d,
                   \end{aligned}
          \right.
\end{equation}
where $|n|=\max\limits_{1\leq i\leq d}|n_i|$ and $r>0$. From \cite{Geb20}, we known that the operator $\mathcal{T}$ can be viewed as a negative discrete fractional Laplacian on $\Z^d$.

 $\star$ $\{V_{\omega}(n)\}_{n\in\mathbb{Z}^d}$ is independent identically distributed (\emph{i.i.d.}) random variables (with common probability distribution $\mu$) on some probability space $(\Omega,\mathcal{F},\mathbb{P})$ ($\mathcal{F}$ a $\sigma$-algebra on $\Omega$ and $\mathbb{P}$ a probability measure on $(\Omega,\mathcal{F})$).

Throughout this paper we assume that:
\begin{itemize}
  \item We have that
      \footnote{ By Schur's test  and self-adjointness of $\mathcal{T}$, we get (for $r>d$)
                     \begin{align*}
                        \|\mathcal{T}\|\leq \sup_{m\in{\Z}^d}\sum_{n\neq m}|m-n|^{-r}\leq \sum_{n\in{\Z}^d\setminus\{0\}}|n|^{-r}<\infty,
                      \end{align*}
                  where $\|\cdot\|$ is the standard operator norm on $\ell^2({\Z}^d)$.
                }
     $d<r<\infty$.
  \item  Let supp$(\mu)=\{x: \mu(x-\epsilon,x+\epsilon)>0 \text{ for any }\epsilon>0\}$ be the support of the common distribution $\mu$.  We assume that
  supp$(\mu)$ contains at least two points and supp $(\mu)$ is compact:
  \begin{equation*}
    \text{supp}(\mu)\subset[-M,M],\ \ \ \ 0<M<\infty.
  \end{equation*}
\end{itemize}

Under above assumptions, $\mathrm{H}_{\omega}$ is a bounded self-adjoint operator on $\ell^2(\mathbb{Z}^d)$ for each $\omega\in\Omega$. Denote by $\sigma(\mathrm{H}_{\omega})$ the spectrum of $\mathrm{H}_{\omega}$. A well-known result due to Pastur \cite{Pa80}  imply that there exists a set $\Sigma$ (compact and non-random) such that for $\mathbb{P}$ almost all $\omega$, $\sigma(\mathrm{H}_{\omega})=\Sigma$.

Let us recall the H\"older continuity of a distribution defined in \cite{CKM87}.

\begin{defn}[\cite{CKM87}]
We will say a probability measure $\mu$ is H\"older continuous of order $\rho>0$ if
\begin{align*}
\frac{1}{\mathcal{K}_\rho(\mu)}=\inf_{\kappa>0}\sup_{0<|a-b|\leq \kappa}|a-b|^{-\rho}\mu([a,b])<\infty.
\end{align*}
In this case will call $\mathcal{K}_\rho(\mu)>0$ the disorder of $\mu$.
\end{defn}
The main result of this paper is the following theorem.
\begin{thm}\label{mainthm}
Let ${\mathrm{H}}_{\omega}$ be defined by $(\ref{h})$ with the common distribution $\mu$ being H\"older continuous of order $\rho>0$, i.e., $\mathcal{K}_\rho(\mu)>0$. Assume $r\geq\max\{\frac{200d}{\rho}+25d,\ 1800d\}$. Fix any $0<\kappa<\mathcal{K}_\rho(\mu)$. Then there exists $\lambda_0=\lambda_0(\kappa,\mu,\rho,{M},r,d)>0$ such that for  $\lambda\geq\lambda_0$ and for $\mathbb{P}$  almost all  $\omega\in\Omega$, there exists a positive constant $q:=q(r)\leq \frac{r}{1600}$, such that for any $\phi\in\ell^2(\mathbb{Z}^d)$ satisfying $|\phi(n)|\leq C_{\phi}|n|^{-\theta}$, where $\theta\geq r/160$, there exists a constant $C_{\phi}=C_{\phi}(d,r,q,\theta)$ such that
\begin{equation}\label{dl}
\||X|^{q/2}e^{-\mathrm{i}\mathrm{H}_{\omega}t}\phi\|^2\leq C_{\phi},\ \quad\quad  \forall t\geq 0,
\end{equation}
where $X$ is the usual position operator.
\end{thm}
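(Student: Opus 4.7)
The strategy is to extend the SULE machinery of \cite{Del96} from exponential to power-law decay, and then use it to convert fixed-time power-law localization into a time-uniform weighted $\ell^2$ bound.

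First I would establish a \emph{polynomial SULE} condition for the eigenbasis of $\mathrm{H}_\omega$. Using the power-law eigenfunction localization produced by Shi's multi-scale analysis \cite{Shi2020} together with the H\"older continuity of $\mu$ (which provides the Wegner-type input), I would show that $\mathbb{P}$-almost surely there is an enumeration $\{\psi_k\}$ of an orthonormal eigenbasis, with localization centers $\{n_k\}\subset\mathbb{Z}^d$ forming a near-bijection with the index $k$, and a finite random constant $D(\omega)$ such that
\begin{equation}\label{eq:psule-plan}
|\psi_k(n)|\leq D(\omega)\,(1+|n_k|)^{\tau}\,(1+|n-n_k|)^{-\beta},
\end{equation}
where $\beta$ is a fixed fraction of $r$ (the decay exponent provided by Shi) and $\tau>0$ plays the role of the small parameter $\epsilon$ in the exponential SULE. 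Uniform-in-$k$ control of the prefactor comes from a Borel--Cantelli argument applied to the bad-box events output by the MSA; the new feature, compared with the exponential case, is that summability of these probabilities demands $r$ dominate the MSA scale growth polynomially, which is the source of the assumption $r\geq\max\{200d/\rho+25d,\,1800d\}$.

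With \eqref{eq:psule-plan} in hand, I would expand $\phi=\sum_k c_k\psi_k$, $c_k=\langle\psi_k,\phi\rangle$, and use
\begin{equation*}
\bigl|\bigl(e^{-\mathrm{i}\mathrm{H}_\omega t}\phi\bigr)(n)\bigr|\leq\sum_k|c_k|\,|\psi_k(n)|,
\end{equation*}
a bound that is already uniform in $t$. Each $|c_k|$ is estimated from \eqref{eq:psule-plan} and the hypothesis $|\phi(m)|\leq C_\phi|m|^{-\theta}$ via a two-polynomial convolution sum $\sum_m(1+|m-n_k|)^{-\beta}(1+|m|)^{-\theta}$; since $\beta,\theta>d$, a standard splitting into $\{|m-n_k|\leq|n_k|/2\}$ and its complement yields $|c_k|\leq C(1+|n_k|)^{\tau-\min(\beta,\theta)}$. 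Plugging this back in and performing a second convolution in $k$ (using the near-bijection $k\leftrightarrow n_k$) produces a pointwise polynomial bound $|(e^{-\mathrm{i}\mathrm{H}_\omega t}\phi)(n)|\leq C(1+|n|)^{-\gamma}$ with $\gamma$ essentially $\min(\beta,\theta)-2\tau$. Summing $|n|^q$ against the square then gives $\||X|^{q/2}e^{-\mathrm{i}\mathrm{H}_\omega t}\phi\|^2\leq C_\phi$ whenever $q<2\gamma-d$; the prescribed constraints $q\leq r/1600$ and $\theta\geq r/160$ are precisely what guarantee that $\tau$ can be chosen positive while keeping all three convolution sums convergent.

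The main technical obstacle is Step 1: extracting \eqref{eq:psule-plan} from Shi's MSA with a genuinely $k$-uniform random prefactor $D(\omega)$ and an exponent $\tau$ small enough for the convolution sums in Step 3 to close. Unlike the exponential setting, where tail contributions are super-summable and one has ample margin, here every summation is polynomial and the gap between the input eigenfunction decay ($\sim r$) and the extracted Sobolev exponent ($\sim r/1600$) is entirely consumed by a chain of polynomial convolution losses. Careful bookkeeping of these losses, together with the factor $(1+|n_k|)^{\tau}$ picked up at each Borel--Cantelli step, is what drives the quantitative hypotheses on $r$ and on the relative sizes of $q$ and $\theta$.
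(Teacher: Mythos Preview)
Your overall architecture matches the paper's exactly: first upgrade Shi's eigenfunction decay to a power-law SULE bound
\[
|\psi_{j,\omega}(n)|\leq C(\epsilon',\gamma)\,|n_{j,\omega}|^{\epsilon'\gamma}\,|n-n_{j,\omega}|^{-\gamma}
\]
with a $j$-independent constant, via a Borel--Cantelli argument on the MSA bad-box events (your Step~1), and then feed this into the eigenfunction expansion of $e^{-\mathrm{i}\mathrm{H}_\omega t}\phi$. The paper also needs, and proves, your ``near-bijection'' as a separate counting lemma: $\#\{j:|n_{j,\omega}|\leq L\}\leq CL^d$, hence $|n_{j,\omega}|\gtrsim j^{1/d}$ after reordering.

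The one genuine difference is how the final sum is organized. You aim for a \emph{pointwise} bound $|(e^{-\mathrm{i}\mathrm{H}_\omega t}\phi)(n)|\leq C(1+|n|)^{-\gamma}$ by first bounding $|c_k|$ and then summing $\sum_k|c_k|\,|\psi_k(n)|$ over $k$. The paper instead writes
\[
\||X|^{q/2}e^{-\mathrm{i}\mathrm{H}_\omega t}\phi\|^2
\leq \|\phi\|\sum_j|\langle\phi,\psi_{j,\omega}\rangle|\,\|X^q\psi_{j,\omega}\|,
\]
estimates each factor purely in terms of $|n_{j,\omega}|$ (getting $|\langle\phi,\psi_{j,\omega}\rangle|\lesssim|n_{j,\omega}|^{-2\gamma/3+d}$ and $\|X^q\psi_{j,\omega}\|\lesssim|n_{j,\omega}|^{q+\gamma/3+d/2}$), and then sums over $j$ using $|n_{j,\omega}|\gtrsim j^{1/d}$. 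This is cleaner because every summand depends only on $|n_{j,\omega}|$, so the origin-centered counting lemma applies directly. Your route requires controlling $\sum_k(1+|n_k|)^{-A}(1+|n-n_k|)^{-\beta}$ uniformly in $n$; the counting lemma is \emph{not} translation-invariant (the SULE prefactor $|n_k|^{\epsilon'\gamma}$ is tied to the origin), so the ``second convolution in $k$'' is not a straight lattice convolution. It can still be made to work by splitting on $|n_k|\lessgtr|n|$ and paying an extra $|n|^d$ from the crude bound $\#\{k:|n_k|\leq 2|n|\}\leq C|n|^d$ in the middle range, and the hypotheses on $r$ leave enough room; but you should flag that this step costs more than a standard polynomial convolution and check the arithmetic carefully, since your stated exponent $\gamma\approx\min(\beta,\theta)-2\tau$ will in fact drop by an additional $d$.
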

\begin{rem}
\begin{itemize}

\item[]
\item The core of Theorem \ref{mainthm} is to obtain a power-law version SULE condition, that is, all energy state of random operator \eqref{h} have the form of
\begin{equation*}
|\psi_{k,\omega}(n)|\leq D(\epsilon,\omega)|n_k|^{\epsilon}|n-n_k|^{-\alpha},
\end{equation*}
where $\epsilon>0$ and $D(\epsilon,\omega)$ is a finite constant that does not depend on energy state.

\item It should emphasized that the index $q$ in Theorem \ref{mainthm} is closely related to the index $r$ of operator $\mathcal{T}$.
This is different from the case where  exponential localization leads to dynamical localization, where there is no restriction on  the index $q$.
\end{itemize}
\end{rem}

\section{Preliminary Knowledge}

\subsection{Sobolev Norm of a Matrix}

Let $X_1, X_2\subset\mathbb{Z}^d$ be finite sets. Define
\begin{equation*}
  \textbf{M}^{X_1}_{X_2}=\{\mathcal{M}=(\mathcal{M}(k,k')\in\mathbb{C})_{k\in X_1,k'\in X_2}\}
\end{equation*}
to be the set of all complex matrices with row indexes in $X_1$ and column indexes in $X_2$. If $Y_1\subset X_1$, $Y_2\subset X_2$, we write $\mathcal{M}^{Y_1}_{Y_2}=(\mathcal{M}(k,k'))_{k\in Y_1, k'\in Y_2}$ for any $\mathcal{M}\in\mathbf{M}^{X_1}_{X_2}$.

\begin{defn}
Let $\mathcal{M}\in\mathbf{M}^{X_1}_{X_2}$. Define for $s\geq s_0$ the Sobolev norms of $\mathcal{M}$ as:
\begin{equation*}
  \|\mathcal{M}\|_{s}^2=C_0(s_0)\sum_{k\in X_1-X_2}\left(\sup_{k_1-k_2=k}|\mathcal{M}(k_1,k_2)|\right)^2\langle k\rangle^{2s},
\end{equation*}
where $\langle k\rangle=\max \{1,|k|\}$ and $C_0(s_0)>0$ is a constant depending on $s_0$.
\end{defn}

\subsection{Green's Function Estimate}

The Green's function plays a key role in spectral theroy. In this subsection we  present the first main result about  Green's function estimate.
For $n\in{\Z}^d$ and $L>0$, define the cube $\Lambda_L(n)=\{k\in{\Z}^d:\ |k-n|\leq L\}$. Moreover,   write $\Lambda_L=\Lambda_L(0)$. The volume of a finite set $\Lambda\subset{\Z}^d$
is defined to be $|\Lambda|=\# \Lambda$. We have $|\Lambda_L(n)|=(2L+1)^d$ ($L\in\N$) for example.

If $\Lambda\subset \Z^d$, denote ${\mathrm{H}}_{\Lambda}=R_{\Lambda}{\mathrm{H}_{\omega}}R_{\Lambda}$, where $R_{\Lambda}$ is the restriction operator. Define the Green's function (if it exists) as
\begin{align*}
G_{\Lambda}(E)=({\mathrm{H}}_{\Lambda}-E)^{-1},\  E\in\R.
\end{align*}

 Let us introduce  \textbf{good} cubes in ${\Z}^d$.
\begin{defn}
Fix $\tau'>0$, $\delta\in (0,1)$ and $d/2<s_0\leq r_1<r-d/2$. We call $\Lambda_L(n)$ is $(E,\delta)$-\textbf{good} if  $G_{\Lambda_L(n)}(E)$  exists and satisfies
\begin{align*}
\|G_{\Lambda_L(n)}(E)\|_s\leq L^{\tau'+\delta{s}} \ \mathrm{for}\ \forall\  s\in[s_0,r_1].
\end{align*}
Otherwise, we call $\Lambda_L(n)$ is $(E,\delta)$-\textbf{bad}.
We call $\Lambda_{L}(n)$ an $(E,\delta)$-\textbf{good} (resp.  $(E,\delta)$-\textbf{bad}) $L$-cube if it is $(E,\delta)$-\textbf{good} (resp. $(E,\delta)$-\textbf{bad}).
\end{defn}

\begin{rem}\label{rgdec}
Let $\zeta\in (\delta,1)$ and $\tau'-(\zeta-\delta)r_1<0$. Suppose that $\Lambda_L(n)$ is $(E,\delta)$-\textbf{good}. Then we have for $L\geq L_0(\zeta,\tau',\delta,r_1,d)>0$ and $|n'-n''|\geq L/2$,
\begin{align}\label{gdec}
|G_{\Lambda_L(n)}(E)(n',n'')|
&\leq |n'-n''|^{-(1-\zeta)r_1}.
\end{align}
\end{rem}

Assume the following relations hold true:
\begin{align}\label{para}
\left\{
\begin{aligned}
&-(1-\delta)r_1+\tau'+2s_0<0,\\
&-\xi r_1+\tau'+\alpha\tau+(3+\delta+4\xi)s_0<0,\\
&\alpha^{-1}(2\tau'+2\alpha\tau+(5+4\xi+2\delta)s_0)+s_0<\tau',\\
\end{aligned}
\right.
\end{align}
where $\alpha,\tau,\tau',r_1>1, \xi>0, s_0>d/2$ and $\delta\in(0,1)$.

\begin{defn}[]
We call a site $n\in\Lambda\subset{\Z}^d$ is $(l,E,\delta)$-\textbf{good} with respect to (\textit{w.r.t}) $\Lambda$ if there exists some $\Lambda_{l}(m)\subset\Lambda$ such that $\Lambda_{l}(m)$ is $(E,\delta)$-\textbf{good} and $n\in \Lambda_{l}(m)$ with ${\rm dist}(n,\Lambda\setminus\Lambda_{l}(m))\geq l/2$. Otherwise, we call $n\in\Lambda\subset{\Z}^d$ is $(l,E,\delta)$-\textbf{bad} \textit{w.r.t} $\Lambda$.
\end{defn}

Let
\begin{align}\label{tr}
\tau>(2p+(2+\rho)d)/\rho.
\end{align}
The multi-scale analysis argument on Green's function estimate is

\begin{thm}[Theorem 4.4 in \cite{Shi2020}]\label{msa}
Let $\mu$ be H\"older continuous of order $\rho>0$ (i.e., $\mathcal{K}_\rho(\mu)>0$). Fix $E_0\in\R$ with $|E_0|\leq 2(\|\mathcal{T}\|+M)$, and assume  \eqref{para}, \eqref{tr} hold true. Assume further that
\begin{equation}\label{pp}
(1+\xi)/\alpha\leq \delta,\quad p>\alpha d+2\alpha p/J
\end{equation}
for $J\in 2\N$. Then for $0<\kappa<\mathcal{K}_\rho(\mu)$, there exists $$\underline{L}_0=\underline{L}_0(\kappa,\mu,\rho,\|\mathcal{T}\|_{r_1},M,J,\alpha,\tau,\xi,\tau',\delta, p,{r_1},s_0,d)>0$$ such that the following holds: For $L_0\geq \underline{L}_0$,  there is some $\lambda_0=\lambda_0(L_0,\kappa,\rho,p,s_0,d)>0$ and $\eta=\eta(L_0,\kappa,\rho,p,d)>0$ so that for $\lambda\geq \lambda_0$ and $k\geq 0$, we have
\begin{align*}
 \mathbb{P}({\ \exists}\  E\in [E_0-\eta,E_0+\eta], \ {\rm s.t.\ }\  \Lambda_{L_k}(m)\  {\rm and}\  \Lambda_{L_k}(n)\ {\rm are\ }  (E,\delta)\text{-{\rm{bad}}})
  \leq L_k^{-2p}
  \end{align*}
for all $|m-n|>2L_k$, where $L_{k+1}=[L_k^{\alpha}]$ and $L_0\geq \underline{L}_0$.
\end{thm}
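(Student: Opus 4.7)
The plan is an induction on the scale sequence $L_{k+1}=[L_k^{\alpha}]$ in the bootstrap multi-scale analysis style, adapted to the Sobolev-norm definition of a \emph{good} cube and to the polynomial long-range hopping $\mathcal{T}$. The two workhorses are (i) a Wegner-type estimate that turns the H\"older continuity $\mathcal{K}_\rho(\mu)>0$ into probabilistic control on $\|G_\Lambda(E)\|$, and (ii) a deterministic lemma that passes a Sobolev-norm Green's function bound from scale $L_k$ up to scale $L_{k+1}$ provided that few bad $L_k$-subcubes appear inside. The base case ($k=0$) should follow from large disorder: for $\lambda\geq\lambda_0$, on the event $\{\min_{j\in\Lambda_{L_0}(n)}|V_\omega(j)-E|\geq\lambda^{-1/2}\}$ the Neumann series for $G_{\Lambda_{L_0}(n)}(E)$ converges and yields the target Sobolev bound uniformly for $E\in[E_0-\eta,E_0+\eta]$, while H\"older continuity bounds the complementary event by $\lesssim (2L_0+1)^d\lambda^{-\rho/2}$. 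Discretizing $E$ on a grid and using independence on disjoint cubes squares the bound and beats $L_0^{-2p}$ once $L_0,\lambda_0$ are large.

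For the inductive step, the heart is the deterministic implication: if inside $\Lambda_{L_{k+1}}(n)$ there do not exist two disjoint $(E,\delta)$-\textbf{bad} $L_k$-subcubes, and the Wegner-type bound $\|G_{\Lambda_{L_{k+1}}(n)}(E)\|\leq L_{k+1}^{\tau}$ holds, then $\Lambda_{L_{k+1}}(n)$ is itself $(E,\delta)$-\textbf{good}. I would prove this by iterating the geometric resolvent identity
\begin{align*}
G_\Lambda(E)=G_{\Lambda'}(E)\oplus G_{\Lambda\setminus\Lambda'}(E)-\bigl(G_{\Lambda'}(E)\oplus G_{\Lambda\setminus\Lambda'}(E)\bigr)\Gamma_{\Lambda'}\, G_\Lambda(E),
\end{align*}
where $\Gamma_{\Lambda'}$ has entries $\lambda^{-1}\mathcal{T}(i,j)=\lambda^{-1}|i-j|^{-r}$: chaining through good $L_k$-subcubes around any given site, at each step one pays a factor $\|G_{\Lambda_{L_k}}(E)\|_s\leq L_k^{\tau'+\delta s}$ on the subcube and $\lambda^{-1}|i-j|^{-r}$ from the coupling, and the off-diagonal decay \eqref{gdec} together with the Sobolev algebra close the sum. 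The conditions \eqref{para} -- notably $-(1-\delta)r_1+\tau'+2s_0<0$ (summability of the Sobolev convolution) and the third inequality tying $\alpha,\tau,\tau',s_0,\xi,\delta$ together -- are precisely the bookkeeping that makes the output at scale $L_{k+1}$ respect the target exponent $\tau'+\delta s$.

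Combining this with the probabilistic inputs, I would bound the joint event ``$\exists E$ with both $\Lambda_{L_{k+1}}(m)$ and $\Lambda_{L_{k+1}}(n)$ bad'' by summing the probability of two disjoint $(E,\delta)$-\textbf{bad} $L_k$-subcubes inside one of the $L_{k+1}$-cubes (controlled via the inductive hypothesis $L_k^{-2p}$ and the inequality $p>\alpha d+2\alpha p/J$ from \eqref{pp}, which ensures that $|\Lambda_{L_{k+1}}|^2 L_k^{-2p}\leq L_{k+1}^{-2p}$ after combinatorial counting), a Wegner probability at scale $L_{k+1}$ via \eqref{tr}, and the independence from $|m-n|>2L_{k+1}$ that squares the resulting single-cube bound. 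The main obstacle, and what separates this from standard short-range MSA, is that the geometric resolvent identity no longer produces a boundary term on a thin shell but a full sum over pairs $(i,j)\in\Lambda'\times(\Lambda\setminus\Lambda')$ weighted by $|i-j|^{-r}$; controlling it requires $r>d$ for $\ell^1$-summability, the Sobolev-norm framework (which natively tracks polynomial off-diagonal decay), and the strong pointwise decay \eqref{gdec} on good subcubes. Balancing these losses against the probabilistic gain at each scale is the delicate step, and is precisely what forces the elaborate parameter conditions \eqref{para}, \eqref{tr}, \eqref{pp}.
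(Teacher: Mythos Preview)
The paper does not prove this theorem at all: it is quoted verbatim as Theorem~4.4 of \cite{Shi2020} and used as a black box in the proof of Theorem~\ref{upl}. So there is no ``paper's own proof'' to compare your proposal against.

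That said, your outline is a faithful high-level sketch of the bootstrap multi-scale analysis that \cite{Shi2020} carries out in the polynomial long-range setting: induction on the scales $L_{k+1}=[L_k^\alpha]$, a large-disorder/Neumann-series base case combined with the H\"older--Wegner input, a deterministic coupling lemma via iterated resolvent identities in the Sobolev-norm framework, and a probabilistic step that combines the inductive bound with independence on disjoint cubes. Your remark that the long-range hopping replaces the thin boundary shell by a full double sum weighted by $|i-j|^{-r}$, and that this is exactly why one works with Sobolev norms and needs the parameter relations \eqref{para}, is the right diagnosis of what distinguishes this from short-range MSA.

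One small imprecision: in the inductive probabilistic step you write that \eqref{pp} ``ensures that $|\Lambda_{L_{k+1}}|^2 L_k^{-2p}\leq L_{k+1}^{-2p}$.'' That inequality is false for $\alpha>1$. The role of $J\in 2\N$ is that one allows up to $J$ pairwise-disjoint bad $L_k$-subcubes inside $\Lambda_{L_{k+1}}$ (the deterministic lemma still closes in that case, which is where $(1+\xi)/\alpha\le\delta$ enters), and bounds instead the probability of $J+1$ bad subcubes by extracting $J/2$ far-separated pairs, giving roughly $|\Lambda_{L_{k+1}}|^{J}\,L_k^{-pJ}$. The condition $p>\alpha d+2\alpha p/J$ is exactly what makes $L_k^{\alpha d J-pJ}\leq L_{k+1}^{-2p}$.
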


\subsection{Power-law localization}
Recall the Poisson's identity: Let $\psi=\{\psi(n)\}\in {\C}^{\Z^d}$ satisfy $\mathrm{H}_\omega \psi=E\psi$. Assume further $G_{\Lambda}(E)$ exists for some  $\Lambda\subset{\Z}^d$. Then for any $n\in\Lambda$, we have
\begin{align}\label{pi}
\psi(n)=-\sum_{n'\in\Lambda,n''\notin\Lambda}\lambda^{-1} G_\Lambda(E)(n,n')\mathcal{T}(n',n'')\psi(n'').
\end{align}
From Shnol's Theorem of \cite{Han19} in long-range operator case, to prove pure point spectrum of $\mathrm{H}_{\omega}$,  it suffices to show that each $\varepsilon$-generalized  eigenfunction belongs to $\ell^2({\Z}^d)$.
In \cite{Shi2020}, Shi shows that every $\varepsilon$-generalized (with $0<\varepsilon\leq c(d)\ll1$) eigenfunction $\psi$ of random operator \eqref{h} decays as  $|\psi(n)|\leq {|n|^{-{r}/{600}}}$ for $|n|\gg1$. Specifically, Shi obtains polynomially decaying of each generalized eigenfunction of $\mathrm{H}_{\omega}$ for $\mathbb{P}$ a.e. $\omega$. This yields the  \textit{power-law} localization:

\begin{thm}[Theorem 2.5 in \cite{Shi2020}]\label{pl}
Let ${\mathrm{H}}_{\omega}$ be defined by $(\ref{h})$ with the common distribution $\mu$ being H\"older continuous of order $\rho>0$, i.e., $\mathcal{K}_\rho(\mu)>0$. Let $r\geq\max\{\frac{100d+23\rho d}{\rho}, 331d\}$. Fix any $0<\kappa<\mathcal{K}_\rho(\mu)$. Then there exists $\lambda_0=\lambda_0(\kappa,\mu,\rho,{M},r,d)>0$ such that for  $\lambda\geq\lambda_0$, ${\mathrm{H}}_{\omega}$
has pure point spectrum  for $\mathbb{P}$ almost all  $\omega\in\Omega$. Moreover, for $\mathbb{P}$ almost all  $\omega\in\Omega$, there exists a complete system of eigenfunctions $\psi_{j,\omega}=\{\psi_{j,\omega}(n)\}_{n\in{\Z}^d},\ j=1,2,\cdots,$  satisfying
\begin{equation}\label{power-law}
|\psi_{j,\omega}(n)|\leq C_{j,\omega}|n|^{-r/600},\quad\quad |n|\gg 1.
\end{equation}
\end{thm}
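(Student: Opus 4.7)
The plan is to combine the probabilistic multi-scale output of Theorem \ref{msa} with a deterministic Poisson-identity argument, capped off by a Shnol-type reduction. By the long-range Shnol theorem cited in the excerpt, pure point spectrum of $\mathrm{H}_\omega$ follows once every $\varepsilon$-generalized eigenfunction $\psi\in\C^{\Z^d}$ (of polynomial growth $|\psi(n)|\leq \langle n\rangle^\varepsilon$ satisfying $\mathrm{H}_\omega\psi=E\psi$) in fact lies in $\ell^2(\Z^d)$. Because $r/600>d/2$ whenever $r\geq 331d$, the target decay $|\psi(n)|\leq C_{\psi,\omega}|n|^{-r/600}$ is automatically $\ell^2$-summable, so the theorem reduces to establishing this pointwise polynomial decay for every generalized eigenfunction, at every $E\in\Sigma$, on a set of $\omega\in\Omega$ of full measure; the complete system of eigenfunctions in the conclusion is then just an enumeration of the resulting genuine eigenstates.

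For the probabilistic input I would cover the compact deterministic spectrum $\Sigma$ by finitely many intervals $[E_j-\eta,E_j+\eta]$ supplied by Theorem \ref{msa}. Fix $L_0\geq \underline{L}_0$ and the scale sequence $L_{k+1}=[L_k^{\alpha}]$; with $p$ taken large under \eqref{pp}, a Borel--Cantelli argument applied to the bound $L_k^{-2p}$, unioned over pairs $(m,n)$ in a polynomial-size box $|m|,|n|\leq L_k^{A}$ and over the finite $E_j$-cover, produces a full-measure event $\Omega'$ on which the following holds for every $\omega\in\Omega'$, every $E\in\Sigma$, and every sufficiently large $k$: the set of $(E,\delta)$-\textbf{bad} $L_k$-cube centers inside the box lies in a single ball of radius $\leq 2L_k$. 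Consequently, through any distant site $n$ one can find $m\in\Z^d$ with ${\rm dist}(n,\Z^d\setminus\Lambda_{L_k}(m))\geq L_k/2$ and $\Lambda_{L_k}(m)$ $(E,\delta)$-\textbf{good}.

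For the deterministic step, fix $\omega\in\Omega'$, $E\in\Sigma$, and a generalized eigenfunction $\psi$ with $|\psi(n)|\leq\langle n\rangle^\varepsilon$. Given $n\in\Z^d$ with $|n|$ large, choose the scale $L_k\sim |n|^{1/\alpha}$ supplied by $\Omega'$ and let $\Lambda_{L_k}(m)$ be an $(E,\delta)$-\textbf{good} cube enclosing $n$ deeply. Poisson's identity \eqref{pi} bounds $|\psi(n)|$ by a sum over the boundary pair $(n',n'')$ with $n'\in\Lambda_{L_k}(m)$ and $n''\notin\Lambda_{L_k}(m)$. Remark \ref{rgdec} converts the Sobolev-norm definition of \textbf{good} into the pointwise off-diagonal decay $|G_{\Lambda_{L_k}(m)}(E)(n,n')|\leq |n-n'|^{-(1-\zeta)r_1}$; inserting this, the hopping decay $|\mathcal{T}(n',n'')|=|n'-n''|^{-r}$, and the a priori growth bound $|\psi(n'')|\leq\langle n''\rangle^{\varepsilon}$, the boundary-layer sum is controlled by $C\lambda^{-1}L_k^{-\gamma}$ for some $\gamma=\gamma(r,r_1,\zeta,d,\varepsilon)>0$. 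Substituting $L_k\sim|n|^{1/\alpha}$ and optimizing $\zeta,\delta,r_1$ within the constraints \eqref{para}--\eqref{pp} pins the final exponent at $r/600$ under the hypothesis $r\geq\max\{(100d+23\rho d)/\rho,\ 331d\}$.

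The hard part is the geometric bookkeeping in the probabilistic step: Theorem \ref{msa} only excludes \emph{two simultaneously bad} cubes, whereas I need the stronger assertion that, $\mathbb{P}$-a.s., through every far-out $n$ and every $E\in\Sigma$ some deeply-enclosing good cube exists at scale $L_k\sim|n|^{1/\alpha}$. Bridging that gap requires (i) continuity in $E$ to pass from the finite $E_j$-grid to all $E$ in the covering intervals, using a Wegner-type bound compatible with the Hölder regularity of $\mu$, (ii) the observation that when only one of $\Lambda_{L_k}(m),\Lambda_{L_k}(n)$ is bad a translation of $m$ by a vector $\lesssim L_k$ still captures $n$ deeply inside a good cube, and (iii) choosing $A$ large enough that the polynomial-box union bound is still swallowed by the doubly-exponential growth of $L_k$. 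Once the geometric input is secured, the matching of exponents $(1-\zeta)r_1$, $r$, the boundary volume $L_k^{d-1}$, the $\varepsilon$-loss, and $|n|^{1/\alpha}$ to the clean value $r/600$ is a bookkeeping exercise in \eqref{para}--\eqref{pp}, and the numerical thresholds on $r$ in the hypothesis are exactly what is needed to make that system consistent.
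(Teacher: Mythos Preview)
The paper does not supply its own proof of Theorem~\ref{pl}: the statement is quoted verbatim as ``Theorem 2.5 in \cite{Shi2020}'' and is used as a black box. The only gloss the paper offers is the sentence immediately preceding it, which says that Shi combines the long-range Shnol theorem of \cite{Han19} with a proof that every $\varepsilon$-generalized eigenfunction (for $0<\varepsilon\le c(d)\ll 1$) decays like $|n|^{-r/600}$, the latter obtained from the multi-scale output Theorem~\ref{msa} via the Poisson identity~\eqref{pi}. Your proposal reconstructs exactly this architecture---Shnol reduction, Borel--Cantelli over pairs of cubes in polynomial boxes using the $L_k^{-2p}$ bound, then the deterministic Poisson/Green's-function step at scale $L_k\sim|n|^{1/\alpha}$---so as an outline it matches what the cited source evidently does, and the paper has nothing further to compare against.

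One small correction to your ``hard part'': you list as obstacle~(i) a continuity-in-$E$ argument needed to pass from a finite grid $\{E_j\}$ to all $E$ in the covering intervals. That step is unnecessary here, because Theorem~\ref{msa} is already stated uniformly in $E$ over the full interval $[E_0-\eta,E_0+\eta]$ (note the quantifier ``$\exists\, E\in[E_0-\eta,E_0+\eta]$'' inside the event whose probability is bounded by $L_k^{-2p}$). So once you cover $\Sigma$ by finitely many such intervals, no additional Wegner input is required at this stage. The remaining obstacles~(ii) and~(iii) are genuine and are handled exactly as you describe; the paper's proof of Theorem~\ref{upl} (Step one) carries out precisely this Borel--Cantelli bookkeeping with the same event $\mathbf{E}_k(n_0)$ and can serve as a template. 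The numerical landing at $r/600$ versus the $r/160$ appearing later in Theorem~\ref{upl} reflects different parameter choices and the a~priori $\langle n\rangle^{\varepsilon}$ growth in the generalized-eigenfunction setting, which you correctly flag as a bookkeeping exercise.
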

\begin{rem}
Note that the coefficients $C_{j,\omega}$ in (\ref{power-law}) depend on the selection of energy state.
\end{rem}
\begin{lem}[Lemma A.1 in \cite{Shi2020}]
Let $L>2$ with $L\in\N$ and $\Theta-d>1$. Then we have that
\begin{align}\label{ldec}
\sum_{n\in{\Z}^d:\ |n|\geq L}|n|^{-\Theta}\leq C(\Theta,d)L^{-(\Theta-d)/2},
\end{align}
where $C(\Theta,d)>0$ depends only on $\Theta,d$.
\end{lem}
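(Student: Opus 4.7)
The plan is to reduce the $d$-dimensional tail sum to a one-dimensional series by organising lattice points into sup-norm shells and then applying the integral test. First I would partition
$$\sum_{n\in\Z^d:\ |n|\geq L}|n|^{-\Theta}=\sum_{k\geq L}|S_k|\,k^{-\Theta},\qquad S_k:=\{n\in\Z^d:\ |n|=k\}.$$
Since $|n|=\max_i|n_i|$, the cube $\{|n|\leq k\}$ contains $(2k+1)^d$ lattice points, hence $|S_k|=(2k+1)^d-(2k-1)^d$. A direct binomial expansion (or the mean value theorem applied to $x\mapsto(2x+1)^d$) gives $|S_k|\leq 2d(2k+1)^{d-1}\leq C(d)\,k^{d-1}$.

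After substituting this cardinality bound, the problem reduces to controlling the one-dimensional tail sum $\sum_{k\geq L}k^{d-1-\Theta}$. Because $\Theta-d>1$ we have $d-1-\Theta<-2$, so $x\mapsto x^{d-1-\Theta}$ is positive, decreasing and integrable on $[1,\infty)$; the integral comparison test then yields
$$\sum_{k\geq L}k^{d-1-\Theta}\leq\int_{L-1}^{\infty}x^{d-1-\Theta}\dd x=\frac{(L-1)^{-(\Theta-d)}}{\Theta-d}.$$
Combining with the previous step and using $L>2$ to absorb the shift via $(L-1)^{-(\Theta-d)}\leq 2^{\Theta-d}L^{-(\Theta-d)}$ delivers a bound of the form $C(\Theta,d)L^{-(\Theta-d)}$. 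Since $L\geq 1$ and $\Theta-d>0$, this immediately implies the weaker estimate $C(\Theta,d)L^{-(\Theta-d)/2}$ claimed in the lemma.

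Each step is elementary and there is no substantive obstacle; the only item worth a brief verification is the crude shell cardinality bound, which follows at once from the sup-norm definition. The loss of a factor of $2$ in the exponent permitted by the lemma statement leaves considerable slack, so I would not attempt to pin down an optimal constant $C(\Theta,d)$ — any constant of the form $C(d)/(\Theta-d)$ coming out of the integral suffices for the downstream applications in the paper.
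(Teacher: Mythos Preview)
Your argument is correct; in fact you obtain the sharper bound $C(\Theta,d)L^{-(\Theta-d)}$, and the lemma's stated exponent $-(\Theta-d)/2$ follows a fortiori. The paper does not supply its own proof of this lemma --- it is quoted verbatim as Lemma~A.1 of \cite{Shi2020} and used as a black box --- so there is nothing in the present text to compare your approach against. Your shell decomposition and integral comparison is the standard route and is exactly what one would expect the cited reference to contain.
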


\section{Proof of Theorem \ref{mainthm}}

In order to obtain dynamical localization, it's need to control the location and the size of the boxes outside of which the eigenfunctions has ``effective" decrease.
\begin{thm}\label{upl}
For the operator $\mathrm{H}_{\omega}$ defined by (\ref{h}), we assume that $\kappa,\lambda_0$ and $\psi_{j,\omega},\ j\in\mathbb{N}$ satisfy Theorem \ref{pl}. Let $r\geq\max\{\frac{200d}{\rho}+25d, 331d\}$.
Then  for  $\lambda\geq\lambda_0$, there exists centers $n_{j,\omega}$ associated to the eigenfunctions $\psi_{j,\omega}$ with eigenvalues $E_{j,\omega}$ such that for any $\gamma\in[0,\frac{r}{160}]$ and any $\epsilon'\in[\frac{1}{3},\frac{1}{2})$, there exists  a constant   $C(\epsilon',\gamma)>0$ such that
\begin{equation}\label{upli}
  |\psi_{j,\omega}(n)|\leq C(\epsilon',\gamma)|n_{j,\omega}|^{\epsilon'\gamma}|n-n_{j,\omega}|^{-\gamma},\quad\quad \forall n\in\mathbb{Z}^d.
\end{equation}
Here, $C(\epsilon',\gamma)$ dose not depend on $j$(the eigenfunction).
\end{thm}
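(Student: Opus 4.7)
The plan is to obtain the semi-uniform power-law bound \eqref{upli} by combining the probabilistic MSA input of Theorem \ref{msa} with a Borel--Cantelli argument and then iterating the Poisson identity \eqref{pi}. First I would extract an almost-sure geometric fact about bad cubes: fix a finite cover of the (compact, non-random) spectrum $\Sigma$ by intervals $[E_0-\eta,E_0+\eta]$ as provided by Theorem \ref{msa}. For each such interval and each MSA scale $L_k$, summing the probability estimate over all pairs $(m,n)$ with $|m|,|n|\le L_k^\beta$ and $|m-n|>2L_k$ yields a bound $\lesssim L_k^{2\beta d-2p}$. Choosing $\beta,p$ so that $p>\beta d+1$ (compatible with \eqref{para}, \eqref{tr}, \eqref{pp} under the strengthened hypothesis $r\ge 200d/\rho+25d$ of Theorem \ref{upl}), Borel--Cantelli then produces an a.s.\ finite $K_0(\omega)$ such that for every $k\ge K_0(\omega)$ and every $E\in\Sigma$ at most one $L_k$-cube centered in $\Lambda_{L_k^\beta}$ is $(E,\delta)$-\textbf{bad}.

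Next, since the polynomial decay of Theorem \ref{pl} guarantees each eigenfunction $\psi_{j,\omega}$ attains its $\ell^\infty$-maximum, I define $n_{j,\omega}$ to be any such maximizer. For $n$ with $|n-n_{j,\omega}|$ large I would pick a scale $L_k$ with $L_k\sim |n-n_{j,\omega}|^{1/\alpha}$. By the dichotomy above, provided $L_k^\beta\ge\max(|n|,|n_{j,\omega}|)$ and $k\ge K_0(\omega)$, at least one of $\Lambda_{L_k}(n),\Lambda_{L_k}(n_{j,\omega})$ is $(E_{j,\omega},\delta)$-\textbf{good}. If $\Lambda_{L_k}(n)$ is good, Poisson's identity \eqref{pi}, the Green's function decay \eqref{gdec}, the hopping rule $\mathcal{T}(n',n'')=|n'-n''|^{-r}$, and the summation lemma \eqref{ldec} combine to give $|\psi_{j,\omega}(n)|\le L_k^{-q_0}\sup_m |\psi_{j,\omega}(m)|$ for some effective exponent $q_0=q_0(r,r_1,\zeta,d)$; iterating (or enlarging $L_k$ by a fixed factor) pushes $q_0$ past the target $\gamma\le r/160$ and converts the $L_k^{-q_0}$ bound into $|n-n_{j,\omega}|^{-\gamma}$. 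If instead $\Lambda_{L_k}(n_{j,\omega})$ is the good box, the symmetric application of the Poisson identity centered at $n_{j,\omega}$ delivers the same bound.

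The main obstacle --- and the source of the factor $|n_{j,\omega}|^{\epsilon'\gamma}$ --- is matching scales in the near regime. The dichotomy requires $\max(|n|,|n_{j,\omega}|)\le L_k^\beta$, forcing $L_k\gtrsim |n_{j,\omega}|^{1/\beta}$, which is incompatible with the optimal choice $L_k\sim|n-n_{j,\omega}|^{1/\alpha}$ whenever $|n-n_{j,\omega}|\ll|n_{j,\omega}|$. I would split into two regimes: in the far regime $|n-n_{j,\omega}|\gtrsim |n_{j,\omega}|^{\alpha/\beta}$ the naive choice works and no extra factor is needed, while in the near regime I would enlarge the scale to $L_k\sim |n_{j,\omega}|^{1/\beta}$, paying a factor $\bigl(L_k/|n-n_{j,\omega}|\bigr)^{\gamma}\le |n_{j,\omega}|^{\epsilon'\gamma}$ with $\epsilon'=1/\beta\in[1/3,1/2)$ reflecting the admissible range of $\alpha,\beta$ coming from \eqref{para}--\eqref{pp}. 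Uniformity of $C(\epsilon',\gamma)$ in $j$ then follows because the good-cube dichotomy is an almost-sure statement about $\omega$ alone, independent of the eigenfunction chosen; the precise window $\gamma\le r/160$ and the strengthened lower bound on $r$ in Theorem \ref{upl} emerge from balancing the Green's function decay $(1-\zeta)r_1$ against the hopping exponent $r$ over the number of Poisson iterations required to traverse from $n$ to $n_{j,\omega}$.
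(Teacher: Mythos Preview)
Your overall strategy---combine the MSA two-bad-cube estimate with Borel--Cantelli, then feed the resulting good-cube dichotomy into Poisson's identity \eqref{pi}---is exactly the paper's. But there is a real gap in how you resolve the dichotomy. When $\Lambda_{L_k}(n_{j,\omega})$ is the $(E_{j,\omega},\delta)$-\textbf{good} box, applying \eqref{pi} \emph{centered at $n_{j,\omega}$} bounds $|\psi_{j,\omega}(n_{j,\omega})|$, not $|\psi_{j,\omega}(n)|$: the identity only represents $\psi$ at points \emph{inside} the cube, and $n$ lies outside. There is no ``symmetric application'' that transports decay to the distant point $n$. What this case actually produces is a contradiction: since $n_{j,\omega}$ maximizes $|\psi_{j,\omega}|$, one gets $|\psi_{j,\omega}(n_{j,\omega})|\le C L_k^{-q_0}\sup_m|\psi_{j,\omega}(m)|=C L_k^{-q_0}|\psi_{j,\omega}(n_{j,\omega})|<|\psi_{j,\omega}(n_{j,\omega})|$ for large $k$. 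The paper isolates this as Lemma~\ref{bad} (the cube at the maximizer is \emph{always} bad), and the dichotomy is then not a case split at all---it forces $\Lambda_{L_k}(n)$ to be good for every $n$ in the relevant annulus, and only that branch of your argument survives.

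Two smaller corrections. First, no iteration of \eqref{pi} or ``chain of cubes'' is needed: a single application at $n$, using the crude boundary bound $|\psi_{j,\omega}(n'')|\le |\psi_{j,\omega}(n_{j,\omega})|\le 1$, already gives $|\psi_{j,\omega}(n)|\le C L_k^{-r/25}$; the conversion to $|n-n_{j,\omega}|^{-\gamma}$ with $\gamma\le r/160$ comes purely from the MSA scale relation $L_{k+1}=[L_k^{\alpha}]$ with $\alpha=6$, which yields $L_k\gtrsim |n-n_{j,\omega}|^{1/6}$ on the annulus $A_{k+1}(n_{j,\omega})$. Second, the paper's Borel--Cantelli event is organized differently from yours: rather than all pairs in $\Lambda_{L_k^\beta}$, it fixes a center $n_0$ with $|n_0|\le L_{k+1}^{1/\epsilon'}$ and sums over $n$ in the annulus $\Lambda_{2L_{k+1}}(n_0)\setminus\Lambda_{L_k}(n_0)$. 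This annular geometry, together with the condition $|n_{j,\omega}|^{\epsilon'}<L_{k+1}$ defining the threshold scale $\hat k_2$, is what makes the near-regime loss come out exactly as $L_{k_4}^\gamma\le|n_{j,\omega}|^{\epsilon'\gamma}$ and pins $\epsilon'\in[1/3,1/2)$ to the parameter choices \eqref{p}--\eqref{r}.
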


From Theorem \ref{upl}, the eigenfunctions $\psi_{j,\omega}$ are localized outside boxes of size $|n_{j,\omega}|/2$ around ``centers" $n_{j,\omega}$. This result is stronger than the power-law localization.

In the following, we choose appropriate parameters satisfying Remark \ref{rgdec}, \eqref{para}-\eqref{pp}. For this purpose, we can set by direct calculation that
\begin{align}\label{p}
\alpha=6,\quad \delta=\frac{1}{2},\quad \xi=2, \quad \zeta=\frac{19}{20}, \quad p=13d.
\end{align}
We define $J_\star=J_\star(d,\varepsilon)$ to be the smallest even integer satisfying  $p>6d+\frac{12}{J_\star}p$.
As a consequence, we can set
\begin{align}\label{tau}
\tau=\frac{29d}{\rho}+d,\quad s_0=\frac{3}{4}d, \quad \tau'=\frac{87d}{\rho}+7d,
\end{align}

\begin{equation}\label{r}
r\geq \max\left\{\frac{200d}{\rho}+25d,\ 331d\right\},\qquad r_1=r-8d.
\end{equation}

In order to prove Theorem \ref{mainthm}, we  need the following lemma which says that if $\psi$ is an eigenfunction of $\mathrm{H}_{\omega}$ with eigenvalue $E$, then $E$ must be close to the spectrum of $\mathrm{H}_{\Lambda_{L}(n)}$ provided $L$ is big enough and $\Lambda_{L}(n)$ is centered on a maximum of $|\psi(n)|$.

\begin{lem}\label{bad}
There exists a constant $L_{\star}(d,r)$ so that if $\psi\in\ell^{2}(\mathbb{Z}^d)$ is an eigenfunction of $\mathrm{H}_{\omega}$ with eigenvalue $E$, and $n_{\star}$ satisfies $|\psi(n_{\star})|=\sup\{|\psi(n)|,\ n\in\mathbb{Z}^d\}$, then $\Lambda_{L}(n_{\star})$ is $(E,1/2)$-{bad} for all $L\geq L_{\star}(d,r)$.
\end{lem}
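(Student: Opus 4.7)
The plan is to argue by contradiction: assume $\Lambda_L(n_\star)$ is $(E,1/2)$-\textbf{good} and derive $|\psi(n_\star)|<|\psi(n_\star)|$ for all sufficiently large $L$, which is impossible since $\psi$ is a nonzero eigenfunction and $|\psi(n_\star)|=\sup_n|\psi(n)|>0$. The vehicle is Poisson's identity \eqref{pi} applied at the special point $n=n_\star$, which yields
\begin{equation*}
\psi(n_\star)=-\lambda^{-1}\sum_{\substack{n'\in\Lambda_L(n_\star)\\ n''\notin\Lambda_L(n_\star)}}G_{\Lambda_L(n_\star)}(E)(n_\star,n')\,\mathcal{T}(n',n'')\,\psi(n'').
\end{equation*}
Taking moduli, using $|\psi(n'')|\leq|\psi(n_\star)|$, $|\mathcal{T}(n',n'')|=|n'-n''|^{-r}$, and $\lambda\geq 1$, it suffices to prove that the double sum tends to $0$ as $L\to\infty$.

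I would split the sum into an \emph{outer} region $|n'-n_\star|\geq L/2$ and an \emph{inner} region $|n'-n_\star|<L/2$. In the outer region, the pointwise decay estimate in Remark \ref{rgdec} applies with $\zeta=19/20$ (per the parameter choice \eqref{p}), giving $|G_{\Lambda_L(n_\star)}(E)(n_\star,n')|\leq |n_\star-n'|^{-(1-\zeta)r_1}$. Summing first in $n''\notin\Lambda_L(n_\star)$ (which is bounded by the absolutely convergent series $\sum_{k\neq 0}|k|^{-r}<\infty$ since $r>d$), the outer contribution is controlled by
\begin{equation*}
C(r,d)\sum_{L/2\leq |n'-n_\star|\leq L}|n_\star-n'|^{-(1-\zeta)r_1}\leq C' L^{d-(1-\zeta)r_1},
\end{equation*}
which vanishes as $L\to\infty$ because $(1-\zeta)r_1=r_1/20>d$ by the choice \eqref{r}.

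For the inner region, $|n'-n''|\geq L/2$ for every $n''\notin\Lambda_L(n_\star)$, so $\sum_{n''\notin\Lambda_L(n_\star)}|n'-n''|^{-r}\leq C(r,d)L^{-(r-d)}$ uniformly in $n'$. To control $\sum_{n'\in\Lambda_L(n_\star)}|G_{\Lambda_L(n_\star)}(E)(n_\star,n')|$, I would use the definition of the Sobolev norm: for $s_0>d/2$, Cauchy--Schwarz in $k=n_\star-n'$ gives $\sum_{n'}|G(n_\star,n')|\leq C(s_0,d)\,\|G_{\Lambda_L(n_\star)}(E)\|_{s_0}$, and the \textbf{good}-cube hypothesis bounds this by $C\,L^{\tau'+\delta s_0}=C\,L^{\tau'+s_0/2}$. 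Combining, the inner contribution is at most $C\,L^{\tau'+s_0/2-(r-d)}$, which vanishes precisely when $r>\tau'+s_0/2+d$; this is comfortably implied by the choice \eqref{tau} and the hypothesis $r\geq 200d/\rho+25d$.

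The main (though routine) obstacle is to verify cleanly that both split sums are actually $o(1)$ with the numerology of \eqref{p}--\eqref{r}: the outer bound needs $(1-\zeta)r_1>d$, while the inner bound demands $r$ be large compared with the Sobolev exponent $\tau'+s_0/2$, both of which are guaranteed by the standing choice of parameters. Given these estimates, the double sum in the Poisson identity is strictly less than $\lambda\geq 1$ for all $L\geq L_\star(d,r)$, contradicting $|\psi(n_\star)|>0$ and therefore forcing $\Lambda_L(n_\star)$ to be $(E,1/2)$-\textbf{bad}.
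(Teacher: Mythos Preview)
Your proposal is correct and follows essentially the same route as the paper: argue by contradiction, apply Poisson's identity \eqref{pi} at $n_\star$, bound $|\psi(n'')|\leq|\psi(n_\star)|$, and split the $n'$-sum at $|n'-n_\star|=L/2$, using the Sobolev norm bound on the Green's function in the inner piece and the pointwise decay of Remark~\ref{rgdec} in the outer piece. Your handling is in fact slightly cleaner than the paper's---you avoid the secondary split of the outer region by using the uniform bound $\sum_{n''\neq n'}|n'-n''|^{-r}<\infty$, and your Cauchy--Schwarz argument for $\sum_{n'}|G(n_\star,n')|$ is a neat alternative to the paper's pointwise Sobolev bound followed by counting---but the ideas and the required parameter checks are the same.
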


\begin{proof}
Let $\psi\in\ell^2(\mathbb{Z}^d)$ be as in the lemma, so $n_{\star}$ exists. Suppose that $\Lambda_{L_k}(n_{\star})$ is $(E,1/2)$-good for some  $k\geq k_1=k_1(d,r)$ sufficiently large. Applying the identity (\ref{pi}) at the point $n_{\star}$, one has that
\begin{align*}
|\psi(n_{\star})|&\leq \sum_{n'\in \Lambda_{L_k}(n_{\star}),\atop n''\notin \Lambda_{L_k}(n_{\star})}C(d)  |G_{\Lambda_{L_k}(n_{\star})}(E)(n_{\star},n')|\cdot|n'-n''|^{-r}|\psi(n'')|\\
&\leq\ {\rm (V)}+{\rm (VI)},
\end{align*}
where
\begin{align*}
{\rm (V)}&=\sum_{|n'-n_{\star}|\leq L_k/2, \atop|n''-n_{\star}|>L_k} C(d,s_0)L_k^{\tau'+\frac{1}{2} s_0}|n'-n''|^{-r}|\psi(n_{\star})|,\\
{\rm (VI)}&=\sum_{L_k/2< |n'-n_{\star}|\leq L_k , \atop |n''-n_{\star}|>{L_k}} C(d) |n'-n_{\star}|^{-\frac{r_1}{20}}|n'-n''|^{-r}|\psi(n_{\star})|.
\end{align*}
When $|n'-n_{\star}|\leq L_k/2$ and $|n''-n_{\star}|>L_k$, one has that $|n'-n''|\geq |n''-n_{\star}|-|n'-n_{\star}|>|n''-n_{\star}|/2$.
By \eqref{ldec}, we have that
\begin{align*}
{\rm (V)}&\leq\sum_{|n''-n_{\star}|>L_k} C(r,d,s_0)L_k^{d+\tau'+\frac{1}{2} s_0}|n''-n_{\star}|^{-r}|\psi(n_{\star})|\\
&\leq C(d,r,s_0)L_k^{-\frac{r}{2}+\tau'+\frac{1}{2} s_0+\frac{3}{2}d}|\psi(n_{\star})|,
\end{align*}
For the term (VI), one has that
\begin{align*}
{\rm (VI)}=&\ (\sum_{L_k/2<|n'-n_{\star}|\leq L_k , \atop L_k<|n''-n_{\star}|<2L_k}+\sum_{L_k/2<|n'-n_{\star}|\leq L_k , \atop |n''-n_{\star}|\geq{2L_k}}) C(d)|n'-n_{\star}|^{-\frac{r_1}{20}}|n'-n''|^{-r}|\psi(n_{\star})|\\
\leq &\ \sum_{L_k/2< |n'-n_{\star}|\leq L_k }C(d,r)L_k^d|n'-n_{\star}|^{-\frac{r_1}{20}}|\psi(n_{\star})|\\
&\ \ +\sum_{L_k/2<|n'-n_{\star}|\leq L_k , \atop |n''-n_{\star}|\geq{2L_k}}C(d,r)  |n'-n_{\star}|^{-\frac{r_1}{20}}|n''-n_{\star}|^{-r}|\psi(n_{\star})|\\
\leq &\ C(d,r_1)L_k^{-\frac{r_1}{20}+2d}|\psi(n_{\star})|+C(d,r_1,r)L_k^{-\frac{r_1}{20}-\frac{r}{2}+\frac{3}{2}d}|\psi(n_{\star})|\\
\leq&\ C(d,r_1,r)L_k^{-\frac{r_1}{20}+2d}|\psi(n_{\star})|.
\end{align*}
Recalling (\ref{p})-(\ref{r}), one has that
\begin{equation*}
-\frac{r}{2}+\tau'+\frac{1}{2} s_0+\frac{3}{2}d<-\frac{r}{25},\quad -\frac{r_1}{20}+2d<-\frac{r}{25}.
\end{equation*}
Hence, for  large enough $k$ (depending on $d$ and $r$),
$$|\psi(n_{\star})|\leq L_k^{-\frac{r}{30}}|\psi(n_{\star})|<|\psi(n_{\star})|,$$
which is impossible. Therefore $\Lambda_{L_{k}}(n_{\star})$ is $(E,1/2)$-bad for all $k\geq k_1(d,r)$. The lemma is proved.
\end{proof}

Now we can give the proof of Theorem \ref{upl}.

\begin{proof}[{\bf Proof of Theorem \ref{upl}}]
Under the hypotheses of Theorem \ref{upl} and Theorem \ref{pl}, $\mathrm{H}_{\omega}$ has  power-law localization for $\mathbb{P}$ almost all $\omega\in\Omega$. This means that there exists $\Omega_1\subset\Omega$, $\mu(\Omega_1)=1$ so that for all $\omega\in\Omega_1$, $\sigma_c(\mathrm{H}_{\omega})=\emptyset$ and for all eigenvalues $E_{j,\omega}$, the corresponding eigenfunction $\psi_{j,\omega}$ is $\ell^2$ and satisfies (\ref{power-law}) with $\|\psi_{j,\omega}\|=1$.

Our ultimate goal is  to control the constant $C_{j,\omega}$ in (\ref{power-law}). More precisely, for each $0<\gamma\leq r/400$, we try to  show that $n_{j,\omega}$ can be chosen so that $C_{j,\omega}$ grows slower than $|n_{j,\omega}|^{\epsilon'\gamma}$ for  $1/3\leq\epsilon'<1/2$.

The outline of the proof is  a little complicated, for the convenience of reader, we will divide the main proof into three parts. Firstly, we fix $L_0=\underline{L}_0$, $\lambda_0,\ \eta$ and $I=[E_0-\eta,E_0+\eta]$ in Theorem \ref{msa}. Recalling Theorem \ref{msa}, we have for $\lambda\geq\lambda_0$ and $k\geq 0$,
\begin{align*}
 \mathbb{P}({\ \exists}\  E\in I, \ {\rm s.t.\ both }\ \Lambda_{L_k}(m)\  {\rm and}\  \Lambda_{L_k}(n)\ {\rm are\ }  (E,\delta)\text{-{\rm{bad}}})
  \leq L_k^{-2p}
  \end{align*}
for all $|m-n|>2L_k$, where $L_{k+1}=[L_k^{\alpha}]$ and $L_0\gg 1$.

\textbf{Step one}.
For any $k\geq 0$, we define the set
\begin{equation*}
A_{k+1}(n_0)=\Lambda_{2L_{k+1}}(n_0)\setminus\Lambda_{L_k}(n_0)
\end{equation*}
and the event:
\begin{equation}\label{ek}
{\mathbf{E}}_k(n_0)=\{ \exists E, \ \exists n\in A_{k+1}(n_0), \text{ s.t.\ both\  }  \Lambda_{L_k}(n_0)\text{ and }  \Lambda_{L_k}(n)\text{ are }(E,1/2)\text{-{bad}}\}.
\end{equation}
From Theorem \ref{msa},
\begin{equation*}
\mathbb{P}({\mathbf{E}}_k(n_0))\leq \sum_{n\in A_{k+1}(n_0)}L_k^{-2p}\leq C(d)(4L_k^\alpha+1)^dL_k^{-2p}\leq C(d)L_k^{-2p+\alpha d}.
\end{equation*}
For  $1/3\leq \epsilon'<1/2$, we define
$$F_{k}=\bigcup_{|n_0|\leq L_{k+1}^{1/\epsilon'}}\mathbf{E}_{k}(n_0).$$
Then  $$\mathbb{P}(F_{k})\leq \sum_{|n_0|\leq L_{k+1}^{1/\epsilon'}}\mathbb{P}(E_{k}(n_0))\leq C(d,\epsilon')L_{k}^{-2p+\alpha d(1+\frac{1}{\epsilon'})}, $$
where  $p$ and $\alpha$ are defined in (\ref{p}). It is easy to verified that  $\sum_{k=0}^{\infty}\mathbb{P}(F_k)< \infty$. Then, the Borel-Cantelli lemma  implies that
\begin{equation*}
  \mathbb{P}\left(\lim_{m\rightarrow\infty}\bigcup_{k\geq m}F_k\right)=0,
\end{equation*}
so that the set
\begin{equation*}
  \Omega_2=\{\omega\in\Omega_1:\ \exists \tilde{k}_1=\tilde{k}_1(p,d,\epsilon'), \text{ s.t. } \forall k\geq \tilde{k}_1,\ \omega\notin F_k\}
\end{equation*}
has full measure.

Now pick $\omega\in\Omega_1\cap\Omega_2$, which will be kept fixed throughout the rest of the proof. Let $\psi_{j,\omega}$ be the  eigenfunction of energy $E_{j,\omega}$, and $n_{j,\omega}$ be a point where $|\psi_{j,\omega}(n_{j,\omega})|$ is maximal. Note that such a point exists, since $\omega\in\Omega_1$ and $\psi_{j,\omega}\in\ell^2(\mathbb{Z}^d)$. Furthermore, we define the integers
\begin{equation}\label{k0}
  \hat{k}_2(\epsilon',m)=\min\{k \geq 0 \text{ such that } |m|^{\epsilon'}<L_{k+1}\},\quad m\in\mathbb{Z}^d,
\end{equation}
and
\begin{equation}\label{k2}
 \bar{ k}_2=\bar{k}_2(p,d,\epsilon',n_{j,\omega})=\max\{\tilde{k}_1, \hat{k}_2(\epsilon',n_{j,\omega})\}.
\end{equation}
For any $k\geq \bar{k}_2$,  we see that $\omega\notin E_{k}(n_{j,\omega})$ from the definition of $F_k$. By (\ref{ek}), $\forall k\geq \bar{k}_2$  and $\forall n\in A_{k+1}(n_{j,\omega})$, either $\Lambda_{L_k}(n_{j,\omega})$  or $\Lambda_{L_k}(n)$ is $(E_{j,\omega},1/2)$-good. Applying Lemma \ref{bad}, there is an integer $$k_2=\max\{k_1,\bar{k}_2\}=\max\{k_1,\tilde{k}_2,\hat{k}_2\},$$
where $k_1={k}_1(d,r)$ and $\tilde{k}_2:=\tilde{k}_2(p,d,r,\epsilon')$ does not depend on $j$, such that for any energy $E_{j,\omega}$,
\begin{equation*}
  \Lambda_{L_k}(n) \text{ is }(E_{j,\omega},1/2)\text{-good}, \quad \forall n\in A_{k+1}(n_{j,\omega}),\quad \forall k\geq k_2.
\end{equation*}

\textbf{Step two}.  Let us apply the Possion's identity (\ref{pi}) at the point $n\in A_{k+1}(n_{j,\omega})$. Similarly to the proof of Lemma \ref{bad}, and recalling (\ref{p})-(\ref{r}), for any $k\geq k_2$, one has that
\begin{align*}
|\psi_{j,\omega}(n)|&\leq \sum_{n'\in \Lambda_{L_k(n)}, \atop n''\notin \Lambda_{L_k(n)}}  |G_{\Lambda_{L_k}(n)}(E_{j,\omega})(n,n')|\cdot|n'-n''|^{-r}|\psi_{j,\omega}(n'')|\\
&\leq C(r,d,s_0)L_k^{-\frac{r}{2}+\tau'+\frac{1}{2} s_0+\frac{3d}{2}}+ C(d,r,r_1)L_k^{-\frac{r_1}{20}+2d}\\
&\leq C(d,r)L_k^{-\frac{r}{25}}.
\end{align*}
 Set
 \begin{equation}
 \widetilde A_{k+1}(n)=\Lambda_{\frac{8}{5}L_{k+1}}(n) \setminus \Lambda_{\frac{4}{3}L_k}(n)\subset A_{k+1}(n).
 \end{equation}
If $n\in \widetilde{A}_{k+1}(n_{j,\omega})$, one has that $L_k\geq (\frac{5}{8}|n-n_{j,\omega}|)^{\frac{1}{6}}$, and
\begin{equation*}
  |\psi_{j,\omega}(n)|\leq C(d,r) |n-n_{j,\omega}|^{-\frac{r}{150}}.
\end{equation*}
Hence, one can find $k_3=\max\{\tilde{k}_3,k_2\}$, where  $\tilde{k}_3=\tilde{k}_3(d,r)$ is independent of $j$, such that
\begin{equation*}
  |\psi_{j,\omega}(n)|\leq  |n-n_{j,\omega}|^{-\frac{r}{160}},\qquad \forall k\geq {k}_3.
\end{equation*}
Then for any $0<\gamma\leq r/160$,
\begin{equation*}
  |\psi_{j,\omega}(n)|\leq |n-n_{j,\omega}|^{-\gamma},\quad \forall n\in\widetilde{A}_{k+1}(n_{j,\omega}),\quad \forall k\geq k_3.
\end{equation*}
Since $\bigcup\limits_{k\geq k_3}\widetilde{A}_{k+1}(n_{j,\omega})=\{n\in\mathbb{Z}^d:\ |n-n_{j,\omega}|>\frac{4}{3}L_{k_3}\}$, there exists $k_4=\max\{\tilde{k}_4,k_3\}$, where $\tilde{k}_4$ does not depend on $j$, such that
\begin{equation*}
  |\psi_{j,\omega}(n)|\leq |n-n_{j,\omega}|^{-\gamma},\qquad \forall n: |n-n_{j,\omega}|\geq L_{k_4}.
\end{equation*}
\textbf{Step three}. Using the fact  that $|\psi_{j,\omega}(n)|\leq 1$ for all $n\in\mathbb{Z}^d$, one has
\begin{equation}\label{jn}
  |\psi_{j,\omega}(n)|\leq C(\epsilon',\gamma)  L_{k_4}^{\gamma} |n-n_{j,\omega}|^{-\gamma}, \qquad \forall  n\in\mathbb{Z}^d.
\end{equation}
Now, we try to control the $j$-dependence of the constant $L_{k_4}^{\gamma}$. Note that the only $j$-dependence of $k_4$ comes from $\hat{k}_2(\epsilon',n_{j,\omega})$. Suppose $\sup\{|n_{j,\omega}| \}<\infty$, then $k_4$ can be chosen $j$-independently, so that we actually obtain a uniform localization for the all energy state
\begin{equation*}
  |\psi_{j,\omega}(n)|\leq C(\epsilon',\gamma) |n-n_{j,\omega}|^{-\gamma}, \quad \forall n\in\mathbb{Z}^d.
\end{equation*}
But the following Lemma \ref{njomega} contradicts this first possibility. So, in fact, $\sup\{|n_{j,\omega}|\}=\infty$, and for $j$ sufficiently large, one has
\begin{equation*}
 k_4=\hat{k}_{2}(\epsilon',n_{j,\omega}),
\end{equation*}
and recalling the definition of $\hat{k}_{2}(\epsilon',n_{j,\omega})$ in (\ref{k0}),
\begin{equation*}
  L_{k_4}\leq |n_{j,\omega}|^{\epsilon'}.
\end{equation*}
Inserting this in (\ref{jn}) yields the announced result. Theorem \ref{upl} is proved.
\end{proof}

We also need to control  the growth of $|n_{j,\omega}|$ with $j$, which is given by the following preliminary lemma:

\begin{lem}\label{njomega}
 Assume that  $n_{j,\omega}$ are defined in Theorem \ref{upl}. Then one can order $|n_{j,\omega}|$ in increasing order such that for $j$ larger enough,
 \begin{equation*}
   |n_{j,\omega}|\geq C |j|^{\frac{1}{d}}.
 \end{equation*}
\end{lem}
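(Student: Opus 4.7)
The plan is to bound the counting function $N(L):=\#\{j:\ n_{j,\omega}\in\Lambda_L\}$ by $C_dL^d$ for $L$ sufficiently large; once this is done, on ordering the centers so that $|n_{1,\omega}|\leq|n_{2,\omega}|\leq\cdots$ one immediately gets $j\leq N(|n_{j,\omega}|)\leq C_d|n_{j,\omega}|^d$, i.e.\ $|n_{j,\omega}|\geq C\,j^{1/d}$. The underlying idea is that each eigenfunction centered in $\Lambda_L$ should carry most of its $\ell^2$-mass in a slightly enlarged box, so by the Parseval identity $\sum_j\|R_\Lambda\psi_{j,\omega}\|^2=|\Lambda|$ only $\lesssim|\Lambda|$ mutually orthonormal vectors can share this property.

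First I would extract from Step 2 of the proof of Theorem \ref{upl} the following input, which is non-circular (Step 2 uses only Step 1 and the Poisson identity, and it is Step 3 alone that invokes Lemma \ref{njomega}): for every $j$,
$$|\psi_{j,\omega}(n)|\leq|n-n_{j,\omega}|^{-\gamma}\qquad\text{whenever}\qquad|n-n_{j,\omega}|\geq L_{k_4(j)},$$
with $\gamma:=r/160$. Unwinding the chain $k_4=\max\{\tilde{k}_4,\tilde{k}_3,k_1,\tilde{k}_1,\hat{k}_2(\epsilon',n_{j,\omega})\}$ together with the defining property $L_{\hat{k}_2}\leq|n_{j,\omega}|^{\epsilon'}$ from \eqref{k0}, one obtains $L_{k_4(j)}\leq\max\{L_{K_0},|n_{j,\omega}|^{\epsilon'}\}$ with $K_0$ an integer independent of $j$. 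In particular, for all $L$ with $L^{\epsilon'}\geq L_{K_0}$ and every $j$ satisfying $n_{j,\omega}\in\Lambda_L$, the threshold obeys $L_{k_4(j)}\leq L^{\epsilon'}$.

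Next I set $L':=L+L^{\epsilon'}$ and fix such a $j$. Any site $n\notin\Lambda_{L'}$ satisfies $|n-n_{j,\omega}|\geq|n|-|n_{j,\omega}|>L^{\epsilon'}\geq L_{k_4(j)}$, so the decay bound applies. Invoking \eqref{ldec} with $\Theta=2\gamma$ (valid since $r\geq 331d$ forces $2\gamma-d=r/80-d>0$),
$$\sum_{n\notin\Lambda_{L'}}|\psi_{j,\omega}(n)|^2\leq\sum_{|m|>L^{\epsilon'}}|m|^{-2\gamma}\leq C\,L^{-\epsilon'(2\gamma-d)/2}<\tfrac12$$
for $L$ large enough. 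Combining with Parseval's identity $\sum_j|\psi_{j,\omega}(n)|^2=1$,
$$\tfrac12\,N(L)\leq\sum_{j:\,n_{j,\omega}\in\Lambda_L}\|R_{\Lambda_{L'}}\psi_{j,\omega}\|^2\leq\sum_j\|R_{\Lambda_{L'}}\psi_{j,\omega}\|^2=|\Lambda_{L'}|\leq C_dL^d,$$
which is the desired bound.

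The main obstacle lies in the first step: one must verify in detail that Step 2 of Theorem \ref{upl} produces the decay with the stated quantitative $j$-dependence, namely that $j$ enters the threshold $L_{k_4(j)}$ only through $|n_{j,\omega}|^{\epsilon'}$ (via $\hat{k}_2$), with a prefactor independent of $j$, and that this step does not implicitly rely on Lemma \ref{njomega}. Once this bookkeeping is settled, the tail estimate and the orthonormality argument are routine.
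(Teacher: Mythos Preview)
Your proof is correct and follows essentially the same strategy as the paper: bound the counting function $N(L)$ by showing that each eigenfunction with center in $\Lambda_L$ carries at least half its mass inside a slightly enlarged box, and then use the Parseval identity $\sum_j\|R_\Lambda\psi_{j,\omega}\|^2=|\Lambda|$. The paper carries this out with the box $\Lambda_{(1+\varepsilon)L}$ and the final bound \eqref{upli}, arriving at \eqref{nj}, whereas you use $\Lambda_{L+L^{\epsilon'}}$ and the intermediate Step~2 decay; the difference is cosmetic.

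The one point where your write-up is cleaner than the paper's is the circularity issue you flagged. The paper's proof of the lemma invokes \eqref{upli} directly, while Step~3 of the proof of Theorem~\ref{upl} appeals to this very lemma to rule out the case $\sup_j|n_{j,\omega}|<\infty$. The logic can be untangled (in that case the uniform bound is stronger than \eqref{upli}, so \eqref{upli} holds either way and the lemma then yields the contradiction), but your choice to feed in only the Step~2 output, together with the explicit bound $L_{k_4(j)}\le\max\{L_{K_0},|n_{j,\omega}|^{\epsilon'}\}$, makes the non-circularity transparent without any case distinction.
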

\begin{proof}
From Theorem \ref{upl}, $\{\psi_{j,\omega}:j=1,2,\cdots\}$ is a  complete normalized orthogonal system of $\ell^2(\mathbb{Z}^d)$ and each $n_{j,\omega}$ is chosen so that $|\psi_{j,\omega}(n_{j,\omega})|=\sup\{|\psi_{j,\omega}(n)|,\ n\in\mathbb{Z}^d\}$.
Therefore we have
\begin{equation}\label{con1}
\ \ \ \ \ \sum_{n\in\mathbb{Z}^d}|\psi_{j,\omega}(n)|^2=1,\qquad \forall j=1,2,\cdots,
\end{equation}
\begin{equation}\label{con2}
 \sum_{j=1}^{\infty}|\psi_{j,\omega}(n)|^2=1,\qquad \forall n\in\mathbb{Z}^d.
\end{equation}

Fix $0<\varepsilon<1$. For $j\in\mathbb{N}\setminus \{0\}$, by using Theorem \ref{upl} and (\ref{ldec}), one has that
\begin{align*}
  \sum_{|n-n_{j,\omega}|\geq \varepsilon L}|\psi_{j,\omega}(n)|^2&\leq C(\epsilon',\gamma)\sum_{|n-n_{j,\omega}|\geq \varepsilon L}|n_{j,\omega}|^{2\epsilon'\gamma}|n-n_{j,\omega}|^{-2\gamma}\\
  &\leq C(d,\epsilon',\gamma)\varepsilon^{-\gamma+\frac{d}{2}}|n_{j,\omega}|^{2\epsilon'\gamma}L^{-\gamma+\frac{d}{2}}.
\end{align*}
Assume $|n_{j,\omega}|\leq L$. Then
\begin{equation*}
  \sum_{|n|\geq (1+\varepsilon)L}|\psi_{j,\omega}(n)|^2\leq \sum_{|n-n_{j,\omega}|\geq \varepsilon L}|\psi_{j,\omega}(n)|^2
  \leq C(d,\epsilon',\gamma)\varepsilon^{-\gamma+\frac{d}{2}}L^{-\gamma+2\epsilon'\gamma+\frac{d}{2}}.
\end{equation*}
From (\ref{con2}),
\begin{equation*}\begin{aligned}
  (2(1+\varepsilon)L+1)^d & =\sum_{j\in\mathbb{N}\setminus\{0\}\atop |n|\leq (1+\varepsilon)L}|\psi_{j,\omega}(n)|^2 \geq \sum_{j:|n_{j,\omega}|\leq L \atop |n|\leq (1+\varepsilon)L}|\psi_{j,\omega}(n)|^2\\
   & \geq \#\{j:|n_{j,\omega}|\leq L\}\min_{j:|n_{j,\omega}|\leq L}  \sum_{|n|\leq (1+\varepsilon)L}|\psi_{j,\omega}(n)|^2\\
   & \geq \#\{j:|n_{j,\omega}|\leq L\}  \left(1-\max_{j:|n_{j,\omega}|\leq L}\sum_{|n|\geq (1+\varepsilon)L}|\psi_{j,\omega}(n)|^2\right)\\
   & \geq \#\{j:|n_{j,\omega}|\leq L\} (1-C(d,\epsilon',\gamma)\varepsilon^{-\gamma+\frac{d}{2}}L^{-\gamma+2\epsilon'\gamma+\frac{d}{2}}).
\end{aligned}\end{equation*}
Choose $\varepsilon=1/2$, $\epsilon'=1/3$ and $\gamma=r/160$. Since $r\geq 331d$, we have $-\gamma+2\epsilon'\gamma+\frac{d}{2}\leq -\frac{3}{16}d$. Then there exists $L_0=L_0(d,r)$ large enough, such that
\begin{equation}\label{nj}
  \#\{j:|n_{j,\omega}|\leq L\}\leq C(d)L^d,  \quad \forall L\geq L_0,
\end{equation}
where $C(d)$ is independent of $L$ and $j$. This tells us that $L\geq L_0$, $N(L)=\#\{j:|n_{j,\omega}|\leq L\}$ is finite and we can reorder the eigenfunctions so $|n_{j,\omega}|$ is increasing. Therefore, one has
\begin{equation*}
  |n_{j,\omega}|\geq c(d)j^{\frac{1}{d}},
\end{equation*}
for $j$ large enough (depending on $d$ and $r$).
\end{proof}

Finally, we can give the complete proof of Theorem \ref{mainthm}.
\begin{proof}[\emph{\textbf{Proof of Theorem \ref{mainthm}}}]
Let $\phi\in\ell^2(\mathbb{Z}^d)$ be such that, for some constant $C_{\phi}>0$ and $\theta\geq r/160 $, $|\phi(n)|\leq C_{\phi}|n|^{-\theta}$. We have to bound $\|X^{q/2}e^{-\mathrm{i}\mathrm{H}_{\omega}t}\phi\|$, for some $q>0$ and all $t>0$. Since $e^{-\mathrm{i}\mathrm{H}_{\omega}t}\phi=\sum_{j}e^{-iE_{j,\omega}t}\langle\phi,\ \psi_{j,\omega}\rangle \psi_{j,\omega}$ and $\|e^{-i\mathrm{H}_{\omega}t}\phi\|= \|\phi\|$, one has that
\begin{equation*}\begin{aligned}
  \|X^{q/2}e^{-\mathrm{i}\mathrm{H}_{\omega}t}\phi\|^2 & = \langle X^{q} e^{-\mathrm{iH}_{\omega}t}\phi, e^{-\mathrm{iH}_{\omega}t}\phi\rangle \\
  & \leq \sum_{j}|\langle\phi, \psi_{j,\omega}\rangle| \  |\langle X^q\psi_{j,\omega},\ e^{-\mathrm{iH}_{\omega}t}\phi \rangle|\\
  &\leq \|\phi\|\sum_{j} |\langle\phi,\ \psi_{j,\omega}\rangle|\|X^q \psi_{j,\omega}\|.
\end{aligned}\end{equation*}
From Theorem \ref{upl}, we can choose $\epsilon'=1/3$. Then there exists  a constant   $C(\gamma)>0$ such that
\begin{equation}\label{upli'}
  |\psi_{j,\omega}(n)|\leq C(\gamma)|n_{j,\omega}|^{\frac{\gamma}{3}}|n-n_{j,\omega}|^{-\gamma},\quad\quad \forall n\in\mathbb{Z}^d.
\end{equation}
One has that
\begin{equation*}
\|X^q \psi_{j,\omega}\|^2  =\sum_{n\in\mathbb{Z}^d}|n^q\psi_{j,\omega}(n)|^2 \leq C(\gamma)|n_{j,\omega}|^{\frac{2\gamma}{3}}\sum_{n\in\mathbb{Z}^d} |n|^{2q}  |n-n_{j,\omega}|^{-2\gamma},
\end{equation*}
where
\begin{equation*}\begin{aligned}
  \sum_{n\in\mathbb{Z}^d} |n|^{2q}  |n-n_{j,\omega}|^{-2\gamma}&\leq \sum_{|n-n_{j,\omega}|\geq 2|n_{j,\omega}|} |n|^{2q}  |n-n_{j,\omega}|^{-2\gamma}
       +\sum_{|n-n_{j,\omega}|< 2| n_{j,\omega}|} |n|^{2q}\\
       &\leq C(d,q,\gamma)|n_{j,\omega}|^{-\gamma+q+\frac{d}{2}} + C(d,q)|n_{j,\omega}|^{2q+d}\\
       &\leq C(d,q,\gamma)|n_{j,\omega}|^{2q+d}.
\end{aligned}\end{equation*}
Therefore
\begin{equation*}
 \|X^q \psi_{j,\omega}\|\leq C(d,r,q,\gamma) |n_{j,\omega}|^{q+\frac{\gamma}{3}+\frac{d}{2}}.
\end{equation*}
Moreover, from the assumption of $\phi$, we have
\begin{equation*}
  |\langle\phi,\psi_{j,\omega}\rangle|\leq \sum_{n\in\mathbb{Z}^d}|\phi(n)||\psi_{j,\omega}(n)|
                \leq C_{\phi}(\gamma)|n_{j,\omega}|^{\frac{\gamma}{3}}\sum_{n}|n|^{-\theta}|n-n_{j,\omega}|^{-\gamma}
\end{equation*}
where
\begin{equation*}\begin{aligned}
  \sum_{n}|n|^{-\theta}|n-n_{j,\omega}|^{-\gamma}& \leq \sum_{|n-n_{j,\omega}|<|n_{j,\omega}|/2} |n|^{-\theta}
         +\sum_{|n_{j,\omega}|\leq |n-n_{j,\omega}|\leq  2|n_{j,\omega}|} |n-n_{j,\omega}|^{-\gamma}\\
   &\ \ \ +\sum_{|n-n_{j,\omega}|\geq  2|n_{j,\omega}|}  |n|^{-\theta}|n-n_{j,\omega}|^{-\gamma}\\
   &\leq C(d,\theta)|n_{j,\omega}|^{-\theta+d}+ C(d,\gamma)|n_{j,\omega}|^{-\gamma+d}+
            C(d,\theta,\gamma)|n_{j,\omega}|^{-\frac{\theta}{2}-\frac{\gamma}{2}+\frac{d}{2}}\\
   &\leq C(d,\theta,\gamma) |n_{j,\omega}|^{-\gamma+d}.
\end{aligned}\end{equation*}
Therefore
\begin{equation*}
   |\langle\phi,\psi_{j,\omega}\rangle|\leq C_{\phi}(d,\theta,\gamma)|n_{j,\omega}|^{-\frac{2\gamma}{3}+d}.
\end{equation*}
Choose $0<q\leq\gamma/10$ and  $\gamma=r/160$. Since  $r\geq 1800d$, we have $-\frac{\gamma}{3}+q+\frac{3d}{2}<-\frac{11d}{10}$. Recalling Lemma \ref{njomega}, one has that
\begin{align*}
 \sum_{j} |\langle\phi,\ \psi_{j,\omega}\rangle|\|X^q \psi_{j,\omega}\| & \leq C_{\phi}(d,r,q,\theta)\sum_{j} |n_{j,\omega}|^{-\frac{\gamma}{3}+q+\frac{3d}{2}}\\
 &\leq C_{\phi}(d,r,q,\theta)\sum_{j} |n_{j,\omega}|^{-\frac{11d}{10}}\\
 &\leq C_{\phi}(d,r,q,\theta)\sum_{j}|j|^{-\frac{11}{10}}\leq C_{\phi}(d,r,q,\theta).
\end{align*}
Therefore
$$\|X^{q/2}e^{-\mathrm{iH}_{\omega}t}\phi\|^2\leq C_{\phi}(d,r,q,\theta),\quad \forall t\geq 0.$$
The proof of  Theorem \ref{mainthm} is finished.
\end{proof}

\section*{Acknowledgements}
This work was supported by China Postdoctoral Science Foundation (No.2021M692717).

\bibliographystyle{abbrv} 

\end{document}